\documentclass[journal]{IEEEtran}
\usepackage{amsthm}
\usepackage{amssymb}
\usepackage{graphicx}
\usepackage{amsmath}
\usepackage{newtxmath}
\usepackage[cal=cm]{mathalpha} 
\usepackage{tablefootnote} 
\usepackage{threeparttable} 
\usepackage{bm}
\newtheorem{theorem}{Theorem}{}
{}
{}
{}
{}
{}

\usepackage{algorithm}
\usepackage{algpseudocode}
\usepackage{graphics}
\usepackage{epsfig}
\usepackage{multirow}
\usepackage{caption}
\usepackage{array}
\usepackage{cite}
\usepackage{stfloats}
\usepackage{subfig} 
\usepackage{booktabs}
\usepackage{comment}
\usepackage{color}
\usepackage{tabu} 
\usepackage{stfloats}
\usepackage{mathtools}
\usepackage{stmaryrd} 
\makeatletter
\let\sum\relax
\let\prod\relax
\DeclareSymbolFont{largesymbols}{OMX}{cmex}{m}{n}
\DeclareMathSymbol{\sum}{\mathop}{largesymbols}{"50}
\DeclareMathSymbol{\prod}{\mathop}{largesymbols}{"51}
\usepackage[cal=cm]{mathalpha} 
\usepackage[colorlinks,linkcolor=blue,anchorcolor=blue,citecolor=blue,bookmarks=true]{hyperref}
\usepackage[capitalize]{cleveref} 
\crefname{figure}{Fig.}{Figs.}   
\Crefname{figure}{Fig.}{Figs.}   
\usepackage{subcaption}
\begin{document}
	\vspace{-1em}
	\title{ Joint Shape-Position Optimization Enhanced 2D DOA Estimation in Movable Antenna Systems}
	
	\vspace{-0.5em}
	\author{\IEEEauthorblockN{
			Chengzhi Ye, \IEEEmembership{Student Member,~IEEE}, 
			Ruoyu Zhang, \IEEEmembership{Senior Member,~IEEE},  
			Lei Yao,\\ 
			Wen Wu, 
			\IEEEmembership{Senior Member,~IEEE}
		}
		\vspace{-0.75em}
		\thanks{
			
			Chengzhi Ye, Ruoyu Zhang, Lei Yao and Wen Wu are with the Key Laboratory of Near-Range RF Sensing ICs and Microsystems (NJUST), Ministry of Education, School of Electronic and Optical Engineering, Nanjing University of Science and Technology, Nanjing 210094, China. Chengzhi Ye and Lei Yao are also with Qian Xuesen College, Nanjing University of Science and Technology, Nanjing 210094, China. (e-mail: qxsycz8166@njust.edu.cn; ryzhang19@njust.edu.cn; yaolei@njust.edu.cn; wuwen@njust.edu.cn).
		}
		\vspace{-2em}
	}

	
	\maketitle
	
	

	%
	\IEEEpeerreviewmaketitle

	\begin{abstract}
	Movable Antenna (MA) technology is emerging as a promising advancement with the potential to significantly enhance the performance of future wireless communication and sensing systems. In this paper, we address two-dimensional (2D) direction of arrival (DOA) estimation via joint shape-position optimization. Specifically, we formulate an optimization problem aimed at minimizing the Cramér-Rao Bound (CRB) based on a 2D DOA estimation model for MA systems. To tackle the highly non-convex nature of this CRB minimization, we investigate the spatial utilization of the movable region (MR) under minimum antenna spacing constraints. By demonstrating that an equilateral triangle yields the minimum overlap area, we strategically design an equilateral triangular MR. This specific geometric configuration enables the exploitation of structural symmetry to simplify the geometric constraints, which effectively reduces the complexity of solving the optimization problem. Subsequently, we derive the optimal MA positions by selecting the candidate locations farthest from the centroid of MR. The results demonstrate that the proposed joint shape-position optimization substantially enhances 2D DOA estimation performance.
	\end{abstract}
	\vspace{-0.5em}
	\begin{IEEEkeywords}
		2D DOA estimation, movable antenna systems, Cramér-Rao Bound, joint shape-position optimization.
	\end{IEEEkeywords}
	\vspace{-0.75em}
	\section{Introduction}

	Integrated sensing and communication (ISAC) has emerged as a core six-generation (6G) technology enabling simultaneous data transmission and environmental sensing \cite{ISACSurveyXingzhouWen2022,DingMAISAC2025}. A critical component of ISAC is accurate two-dimensional (2D) direction-of-arrival (DOA) estimation, which provides precise three-dimensional localization of diverse intelligent devices for emerging Internet of Things (IoT) applications, such as autonomous industrial monitoring and smart robots. However, existing DOA estimation methods primarily rely on conventional fixed-position antenna (FPA) arrays and increasing the number of antennas and RF chains for improving resolution, which is generally impractical for power- and space-constrained IoT applications \cite{RuoyuZhangDOAEstimation2022,11071308}.

	In recent years, movable antenna (MA) technology has emerged as a promising paradigm by allowing antennas to flexibly move within a confined region. 
	Compared to conventional FPA,  MA can achieve additional spatial degrees of freedom by synthesizing a superior aperture without increasing the number of antennas or RF chains, 
	thereby enhancing angular resolution the performance in wireless sensing and communication \cite{ChengzhiYeGPEMA2025,11456856, ShangMAPositionAngle2026}. For instance, the authors in \cite{maMovableAntennaEnhanced2024} optimized the element positions of a MA array to reduce the Cramér-Rao Bound (CRB) for DOA estimation, addressing the ambiguity issue in conventional FPA arrays. A tensor decomposition-based channel estimation method was proposed in \cite{zhangChannelEstimationMovableantenna2024}, which leverages MAs and a two-stage movable training mode to achieve channel estimation performance. 
	The aforementioned works mainly leverage the position variability of MAs in a plane to enhance communication and sensing performance. As a further step,
	the authors in \cite{6DMAWSensing2025} introduced a six-dimensional MA method, which jointly optimizes positions and rotations to minimize the CRB and realize high estimation performance. While existing studies \cite{maMovableAntennaEnhanced2024,ShangMAPositionAngle2026, zhangChannelEstimationMovableantenna2024,MAEnhancedZhuLipeng2024,6DMAWSensing2025} have extensively explored the optimization of MA positions and rotations, they predominantly focus on positioning within a given and fixed movable region (MR). However, in practice, the region shape is a critical engineering variable constrained by the platform’s limited size. By overlooking how the shape itself dictates the achievable array manifold, these conventional approaches fail to fully exploit the spatial degrees of freedom for optimizing estimation.
	
	To address the aforementioned limitations, we propose a novel joint shape-position optimization framework based on the CRB minimization for 2D DOA estimation. Instead of relying on predefined MR boundaries, we mathematically prove that an equilateral triangle yields the minimum overlap area under the minimum antenna spacing constraint. This finding directly motivates the design of an equilateral triangular movable region to maximize spatial utilization. Moreover, the structural symmetry of this geometric configuration can simplify the objective function and effectively reduce the complexity of solving the optimization problem. Building upon this simplified formulation, we develop an efficient strategy to determine the optimal antenna placement by selecting the candidate locations farthest from the MR centroid. Finally, the multiple signal classification (MUSIC) algorithm is executed on the optimized movable antenna array to achieve high-resolution direction finding. Extensive simulation results verify that the proposed design substantially outperforms conventional array configurations in terms of spatial resource utilization and estimation accuracy.

	In this section, we establish the system model for 2D DOA estimation. We consider a MR $\mathcal{C}$ with the area $\Theta$ containing $N$ MAs. Their coordinates in the Cartesian coordinate system $xOy$ are given by
	$\bm{P} = [ \bm{x}, \bm{y} ]^{\mathrm{T}}$,
	where $\bm{x} = [ x_1, \ldots, x_N ]^{\mathrm{T}}$ and $\bm{y} = [ y_1, \ldots, y_N ]^{\mathrm{T}}$ represent the $x$-axis and $y$-axis coordinates of the MAs, respectively. Furthermore, the distance between any two antennas is required to be greater than or equal to $d_{\min}$.
	Assume that $K$ far-field source signals impinge on this MA array. The received signal can be expressed as
	\begin{align}
		\bm{y}(t) = \sum\nolimits_{k=1}^{K} \bm{a}(\bm{r}_k) s_k(t) + \bm{n}(t),
	\end{align}
	where $s_k(t)$ denotes the $k$-th source signal, and $\bm{n}(t) \sim \mathcal{CN}(\bm{0}_{M \times 1}, \sigma_n^2 \bm{I}_{M})$ represents additive white Gaussian noise. $
	\bm{a}(\bm{r}_k) = [ 1, \mathrm{e}^{\mathrm{j}\frac{2\pi}{\lambda} r_{k,1}}, \ldots, \mathrm{e}^{\mathrm{j}\frac{2\pi}{\lambda} r_{k,N}} ]^{\mathrm{T}}$ denotes the $k$-th steering vector,
	where $r_{n,k} = x_n \vartheta_k + y_n \varphi_k$, with $\vartheta_k \triangleq \sin \theta_k \cos \phi_k$ and $\varphi_k \triangleq \cos \theta_k$.
	Considering $T$ snapshots, the received signal can be rewritten in the matrix form as
	\begin{align}
		\bm{Y} = \bm{A} \bm{S} + \bm{N},
	\end{align}
	where $\bm{Y} = [ \bm{y}(1), \bm{y}(2), \ldots, \bm{y}(T) ]\in \mathbb{C}^{N\times T} $, $\bm{S} = [ \bm{s}(1), \bm{s}(2), \ldots, \bm{s}(T) ] \in \mathbb{C}^{K \times T}$, $\bm{A} = [ \bm{a}(\bm{r}_1), \bm{a}(\bm{r}_2), \ldots, \bm{a}(\bm{r}_K) ]\in \mathbb{C}^{N \times K} $, and $\bm{N} = [ \bm{n}(1), \bm{n}(2), \ldots, \bm{n}(T) ]\in \mathbb{C}^{N \times T} $.
	
	Our objective is to efficiently estimate the DOA using the MA array. As indicated by the steering vector formulation, although the area of the MR is fixed, both the shape of the region and the positions of the MAs can be reconfigured. In the next section, we will derive the optimization of the region's shape and the MA positions by minimizing the CRB under the constraint of fixed area $\Theta$.
	\vspace{-0.75em}
	\section{Joint Shape-Position-Optimization}\label{III}
	
	In this section, we propose a joint shape-position optimization to minimize the CRB lower bound for optimizing the MA system, thereby achieving high-precision DOA estimation.
	
	It is well-known that the normalized RMSE serves as a metric for parameter estimation performance. For the $k$-th parameter, the relationship with the CRB satisfies
	$\mathrm{RMSE}(u_k) \geq \sqrt{\mathrm{CRB}(u_k)}.$
	Moreover, when a sufficient number of Monte Carlo trials are conducted, the RMSE closely approaches the CRB. Therefore, to achieve higher accuracy in DOA estimation, we must minimize the CRB as much as possible. In the following, we will sequentially address the optimization from both the shape and position perspectives.

	The CRB for 2D DOA estimation can be expressed as \cite{kay1993fundamentals}
	\begin{align}
		\mathrm{CRB}(\vartheta_k) & = Q /[{\operatorname{var}(\bm{x}) - {\operatorname{cov}^2(\bm{x},\bm{y})}/{\operatorname{var}(\bm{y})}}], \label{CRBu}\\
		\mathrm{CRB}(\varphi_k) & = Q /[{\operatorname{var}(\bm{y}) - {\operatorname{cov}^2(\bm{x},\bm{y})}/{\operatorname{var}(\bm{x})}}],\label{CRBv}
	\end{align}
	where $Q=\frac{\sigma_n^2 \lambda^2}{8\pi^2 T \sigma_k^2 N}$, $\sigma_k^2$ denotes the power of the $k$-th signal, $\operatorname{var}(\bm{x}) =\overline{x^2} - \bar{x}^2$, $\operatorname{var}(\bm{x}) =\overline{y^2} - \bar{y}^2$, $\operatorname{cov}(\bm{x},\bm{y}) = \overline{xy} - \bar{x}\cdot\bar{y}$, $\overline{x^2}= \sum_{n=1}^{N} x_n^2/N$, $\bar{x}=\sum_{n=1}^{N} x_n /N$, $\overline{y^2}= \sum_{n=1}^{N} y_n^2/N$, $\bar{y}=\sum_{n=1}^{N} y_n /N$, and $\overline{x_ny_n}=\sum_{n=1}^{N} x_n y_n/N$.
	\eqref{CRBu} and \eqref{CRBv} indicate that $\mathrm{CRB}(\vartheta_k)$ and $\mathrm{CRB}(\varphi_k)$ are functions of the MA coordinates, and they decrease as $\operatorname{var}(\bm{x}) - {\operatorname{cov}^2(\bm{x},\bm{y})}/{\operatorname{var}(\bm{y})}$ and $\operatorname{var}(\bm{y}) - {\operatorname{cov}^2(\bm{x},\bm{y})}/{\operatorname{var}(\bm{x})}$ increase, respectively. Therefore, to minimize $\mathrm{CRB}(\vartheta_k)$ and $\mathrm{CRB}(\varphi_k)$, we need to maximize these two terms. Consequently, the optimization objective can be transformed into
	\begin{align}
		\min \enspace \mathrm{CRB}(\vartheta) \enspace& \Leftrightarrow\enspace \max_{\bm{x},\bm{y}} \enspace \operatorname{var}(\bm{x}) - {\operatorname{cov}^2(\bm{x},\bm{y})}/{\operatorname{var}(\bm{y})}, \\ 
		\min \enspace \mathrm{CRB}(\varphi) \enspace& \Leftrightarrow\enspace \max_{\bm{x},\bm{y}} \enspace \operatorname{var}(\bm{y}) -{\operatorname{cov}^2(\bm{x},\bm{y})}/{\operatorname{var}(\bm{x})}.
	\end{align}
	To simultaneously optimize $\mathrm{CRB}(\vartheta_k)$ and $\mathrm{CRB}(\varphi_k)$, we consider both objective functions equally important and adopt an equal-weight linear combination. Additionally, all MAs must be placed within $\mathcal{C}$. Thus, the optimization problem can be formulated as
\begin{subequations}\label{maxP}
		\begin{align}
		\max_{\bm{x},\bm{y},\mathcal{C}}& \quad  \operatorname{var}(\bm{x}) + \operatorname{var}(\bm{y}) - \tfrac{\operatorname{cov}^2(\bm{x},\bm{y})}{\operatorname{var}(\bm{x})} - \tfrac{\operatorname{cov}^2(\bm{x},\bm{y})}{\operatorname{var}(\bm{y})},\\
		\mathrm{s.t.}& \quad\bm{P} \in \mathcal{C},
		\sqrt{(x_{n_1} - x_{n_2})^2 + (y_{n_1} - y_{n_2})^2} \ge d_{\min},\nonumber\\
		&\quad n_1, n_2=1,\ldots,N. \label{Const}
	\end{align}
\end{subequations}
		To maximize spatial utilization, MAs should be densely arranged to maximize MR space utilization. Theorem \ref{Theorem1} provides the minimum overlap area structure that satisfies the objective function.
	\begin{figure}
		\centering
		\includegraphics[width=0.9\linewidth]{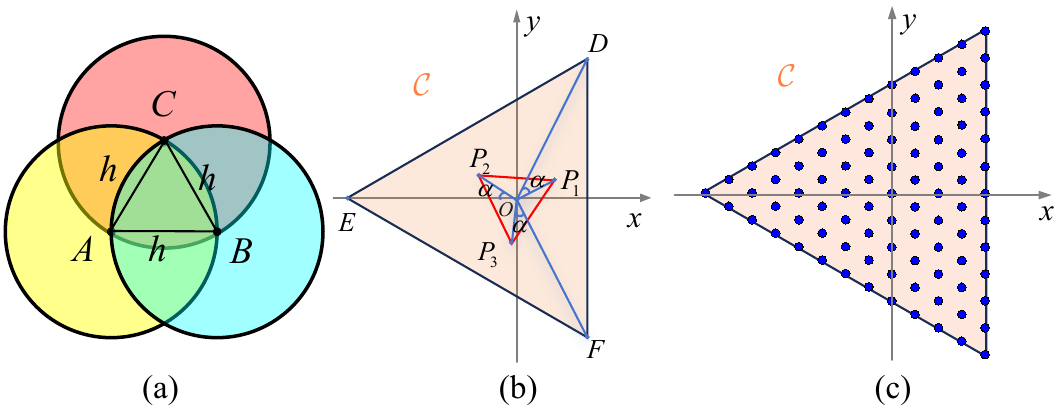}
		\caption{(a) Schematic illustration for the proof of Theorem \ref{Theorem1}. 
			(b) Schematic illustration for the proof of Theorem \ref{Theorem2}. 
			(c) Schematic illustration for the optimization of MA positions.}
		\label{proof1}
		\vspace{-1em}
	\end{figure}
	\begin{theorem}\label{Theorem1}
		Let $A$, $B$, and $C$ be three points on a plane with pairwise Euclidean distances of at least $h$. Let $\odot A$, $\odot B$, and $\odot C$ denote the radius $h$ centered at these points. The minimum total occupied area of these circles, which corresponds to the maximum overlap area among them, is achieved when $A$, $B$, and $C$ form an equilateral triangle with side length $h$.
	\end{theorem}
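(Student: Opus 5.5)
The plan is to write the area of the union of the three disks by inclusion--exclusion and then minimize over all admissible configurations. Write $S=|\odot A\cup\odot B\cup\odot C|$ and set $a=|BC|$, $b=|CA|$, $c=|AB|$, all at least $h$. Inclusion--exclusion gives
\begin{align}
S = 3\pi h^2 - L(a)-L(b)-L(c) + T(a,b,c),
\end{align}
where $T$ is the area of the triple intersection. The lens area $L(d)$ of two radius-$h$ disks at distance $d\le 2h$ is
\begin{align}
L(d)=2h^2\arccos\tfrac{d}{2h}-\tfrac{d}{2}\sqrt{4h^2-d^2},
\end{align}
with $L(d)=0$ for $d\ge 2h$. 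A direct differentiation gives $L'(d)=-\sqrt{4h^2-d^2}\le 0$, so $L$ is nonincreasing in the pairwise distance. Thus each pairwise term alone favours the smallest admissible distance $h$, and the equilateral triangle of side $h$ maximizes $L(a)+L(b)+L(c)$.

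The difficulty is that $T$ enters with a plus sign, so shrinking distances also raises $T$. I therefore avoid splitting into pairwise and triple terms and instead argue monotonicity of the union directly. Move one centre, say $C$, along a direction that decreases $a$ and $b$ while keeping $c$ fixed. The rate of change of the union area equals the boundary flux: the integral over the exposed arcs of $\partial\odot C$ (the parts not covered by $\odot A\cup\odot B$) of the normal component of the velocity. Equivalently, the derivative of the union area with respect to a centre distance is minus the length of the corresponding chord that remains exposed on the union boundary. Since exposed boundary length is nonnegative, moving a centre towards another (so their distance decreases) never increases $S$. Hence $S$ is minimized on the boundary of the feasible set, where the constraints bind.

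The plan is then:
\begin{enumerate}
\item Reduce to $c=h$ by translating $B$ towards $A$ (and the rest rigidly as needed). This does not increase $S$ while the other distances stay $\ge h$.
\item With $|AB|=h$ fixed, move $C$ towards the midpoint of $AB$ until a second constraint binds, say $a=h$.
\item Rotate $C$ about $B$ to decrease $b$ down to $h$, giving the equilateral configuration.
\end{enumerate}
At each step I will check that $S$ is nonincreasing, using the chord-length formula for the derivative.

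The main obstacle is step 3. The rotation about $B$ changes $b$ monotonically, but the derivative of $S$ must be controlled even though $T$ changes. The sign comes from the exposed portion of the chord between $\odot A$ and $\odot C$, which I would verify is nonnegative, and in fact positive unless the configuration is already equilateral. If this geometric flux argument proves awkward, the fallback is to use the explicit formula in the symmetric parametrization $a=h$, $c=h$, $b\in[h,\sqrt3 h]$ (the obtuse range is handled similarly) and check $dS/db\ge 0$ directly. Either way, the minimum is attained at $a=b=c=h$, and this equivalently maximizes the total overlap $3\pi h^2-S$.
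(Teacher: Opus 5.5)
Your route is genuinely different from the paper's. The paper argues greedily: set $\overline{AB}=h$ to maximize one lens, then push $C$ as close as possible to $A$ and $B$. It never deals with the $+T$ term that you rightly flag, and monotonicity of $S$ in the pairwise distances is the right way to fill that hole. But two ingredients are wrong as written.

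\textbf{The derivative identity is misstated.} With the other two distances fixed, $\partial S/\partial b=+|W_{CA}|\ge 0$, not minus. Here $W_{CA}$ is the part of the common chord of $\odot C$ and $\odot A$ consisting of points at least as close to $C$ (equivalently $A$) as to $B$, i.e.\ a Voronoi wall. It is not the part of the chord left uncovered by the third disk.

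To derive it, apply the divergence theorem to the constant velocity field $v$ of $C$ on $R=\odot C\cap\{p:\,|p-C|\le|p-A|,\ |p-C|\le|p-B|\}$. The circular part of $\partial R$ is exactly the exposed arc. The straight parts are $W_{CA}$ and $W_{CB}$, with outward normals pointing to $A$ and $B$. This gives $\dot S=|W_{CA}|\,\dot b+|W_{CB}|\,\dot a$.

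The distinction is not cosmetic. In the equilateral configuration the wall has length $2h/\sqrt{3}$. This agrees with $-L'(h)+T'(h)=\sqrt{3}\,h-h/\sqrt{3}$, where $T'(h)$ comes from differentiating the Reuleaux-triangle intersection. By contrast, the uncovered part of the chord there has length $(\sqrt{3}-1)h$.

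\textbf{The motion claim and your Step 1 are false.} Your sentence that moving a centre towards another never increases $S$ fails because moving one centre changes two distances. Put $A,B,C$ on a line with $\overline{AB}=1.9h$, $\overline{BC}=h$, $\overline{AC}=2.9h$. Then $\odot A\cap\odot C=\emptyset$ and $S=3\pi h^2-L(\overline{AB})-L(\overline{BC})$. Translating $B$ towards $A$ with $C$ fixed gives $\dot S=\sqrt{3}\,h-\sqrt{0.39}\,h>0$, while all distances stay $\ge h$. Carrying $C$ along rigidly fails the same way if $C$ instead sits at distance $h$ on the far side of $A$.

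The repair is to use only motions in which no distance increases:
\begin{itemize}
\item Apply the homothety of ratio $h/\min(a,b,c)\le 1$. All distances shrink, so $S$ does not increase, and one side becomes $h$.
\item Your Step 2 then works. Since $\overline{CA},\overline{CB}\ge h>\overline{AM}$, the angles at $C$ in triangles $ACM$ and $BCM$ are acute, so moving $C$ towards $M$ decreases both $a$ and $b$.
\item Step 3, which you call the main obstacle, is immediate, because only $b$ changes.
\end{itemize}
Even simpler, move along the segment from $(a,b,c)$ to $(h,h,h)$. It stays inside the convex cone cut out by the triangle inequalities, and all three distances are nonincreasing along it, so $S$ is nonincreasing by the corrected formula.
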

	\begin{proof}
		As illustrated in \Cref{proof1} (a), minimizing the total area occupied by $\odot A$, $\odot B$, and $\odot C$ is mathematically equivalent to maximizing their mutual overlapping area. The intersection between $\odot A$ and $\odot B$ reaches its maximum value when the distance between their centers is at the minimum allowable threshold such that $\overline{AB} = h$, where $\overline{AB}$ denotes the Euclidean distance. To further minimize the total occupied area, the third point $C$ must be positioned to maximize its joint overlap with both $\odot A$ and $\odot B$, which requires $C$ to be as close as possible to both $A$ and $B$ simultaneously. Given the distance constraints $\overline{AC} \ge h$ and $\overline{BC} \ge h$, the maximum shared area is achieved if and only if $\overline{AC} = \overline{BC} = h$, where these equalities represent the boundary of the feasible region. Consequently, the points $A$, $B$, and $C$ must form an equilateral triangle with side length $h$ to maximize the total intersection area and thus minimize the resultant occupied area.
	\end{proof}
	\vspace{-0.5em}
		Theorem \ref{Theorem1} describes the minimum overlap area structure that maximizes the utilization of spatial resources.
		Based on Theorem \ref{Theorem1}, we design an equilateral triangular MR that improves DOA estimation performance under the condition of equal area. Without loss of generality, let the center of the equilateral triangular MR be the origin. It should be noted that the objective function in \eqref{maxP} exhibits cyclic symmetry with respect to $\bm{x}$ and $\bm{y}$. Furthermore, the equilateral triangular MR also possesses a high degree of symmetry. Therefore, we can further simplify the optimization problem.
		\begin{theorem}\label{Theorem2}
			When the number of MAs is a multiple of three and the MR is an equilateral triangular MR, the optimization problem \eqref{maxP} can be simplified as
			\begin{subequations}
				\begin{align}
				\max_{\bm{\rho},\bm{x},\bm{y},\mathcal{C}} &\quad
				\overline{\rho^2}, \\
				\mathrm{s.t.} \quad&
				\bar{x}= \bar{y}=
				\overline{xy}=0, \enspace\eqref{Const}  \label{const3}.
			\end{align}
			\end{subequations}
			where $\bm{\rho} = [\rho_1, \ldots, \rho_N]^{\mathrm{T}}$, $\overline{\rho^2} =\tfrac{1}{N} \sum\nolimits_{n=1}^{N} \rho_n^2$, and $\rho_n=\sqrt{x_n^2+y_n^2}$.
	\end{theorem}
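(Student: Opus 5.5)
We will show that, on the subset of configurations singled out by the symmetry of the equilateral triangular MR, the objective in \eqref{maxP} equals $\overline{\rho^2}$ up to a constant factor. The plan is to split the objective into a ``spread'' part and a ``correlation'' part and show that the symmetry removes the correlation part. Throughout, we place the centroid of the equilateral triangle at the origin. Since $N=3M$, we may restrict attention to configurations that are invariant under the rotation group $\{R_0,R_{120^\circ},R_{240^\circ}\}$ of the triangle: the $N$ MAs are partitioned into $M$ orbits $\{\bm p_m, R\bm p_m, R^2\bm p_m\}$. The constraint set \eqref{Const} is preserved by this rotation, and so is the feasible region $\mathcal{C}$. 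The objective also has the cyclic $x\leftrightarrow y$ symmetry noted before the theorem, which is why such symmetric configurations are natural candidates.

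\textbf{Step 1 (vanishing moments).} For each orbit, $\bm p+R\bm p+R^2\bm p=(I+R+R^2)\bm p=\bm 0$, because $1+\omega+\omega^2=0$ with $\omega=e^{\mathrm{j}2\pi/3}$. Summing over orbits gives $\bar x=\bar y=0$. For the second moments, write a point in complex form as $z=x+\mathrm{j}y$. Then $z^2=x^2-y^2+2\mathrm{j}xy$. Under the rotation $z^2$ is multiplied by $\omega^2$, and $1+\omega^2+\omega^4=0$. Hence the orbit sum of $z^2$ vanishes, which gives $\overline{x^2}=\overline{y^2}$ and $\overline{xy}=0$. These are exactly the constraints in \eqref{const3}.

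\textbf{Step 2 (simplify the objective).} With the moments from Step 1 we get $\operatorname{cov}(\bm x,\bm y)=\overline{xy}-\bar x\bar y=0$. The two negative correlation terms in \eqref{maxP} therefore drop out. Also $\operatorname{var}(\bm x)+\operatorname{var}(\bm y)=\overline{x^2}+\overline{y^2}=\overline{\rho^2}$. So the objective reduces to $\overline{\rho^2}$. Because $\operatorname{var}(\bm x)=\operatorname{var}(\bm y)=\overline{\rho^2}/2$, each of $\mathrm{CRB}(\vartheta_k)$ and $\mathrm{CRB}(\varphi_k)$ equals $2Q/\overline{\rho^2}$, so the two are balanced and both decrease in $\overline{\rho^2}$.

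\textbf{Step 3 (justify the restriction).} We also want to argue that imposing the constraints in \eqref{const3} loses nothing. For any feasible $\bm P$, the objective $f$ satisfies $f\le \operatorname{var}(\bm x)+\operatorname{var}(\bm y)=\overline{\rho^2}-\bar x^2-\bar y^2\le\overline{\rho^2}$. The first inequality holds because the correlation terms are nonpositive; the second is immediate. Equality in both holds exactly when $\bar x=\bar y=0$ and $\operatorname{cov}(\bm x,\bm y)=0$. Thus $\overline{\rho^2}$ is an upper bound on the original objective, and it is attained on the symmetric subset. This suggests that the simplified problem captures the optimum.

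\textbf{Main obstacle.} The delicate step is turning this into a clean equivalence. The bound $f\le\overline{\rho^2}$ holds for every configuration, so the gap is not in the inequality itself. The gap is that an asymmetric configuration could have a larger $\overline{\rho^2}$ than any symmetric one while still having a larger $f$. I would first try a symmetrization argument: take the three rotated copies of an optimal configuration and show that some rotation-invariant configuration does at least as well. The difficulty is that the union of the three copies has $3N$ points, not $N$. If that argument fails, I would state the theorem as a reformulation valid within the class of triangle-symmetric layouts. That class is precisely what the hypothesis ``$N$ is a multiple of three'' is meant to enable, and the minimum-spacing constraint \eqref{Const} together with $\bm P\in\mathcal{C}$ is kept unchanged in that reformulation.
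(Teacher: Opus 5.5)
Your argument is essentially the paper's: partition the $N=3M$ antennas into $M$ triples related by the triangle's $120^\circ$ rotation, so that each triple has vanishing $\sum x_n$, $\sum y_n$ and $\sum x_n y_n$, the covariance terms drop out, and the objective reduces to $\overline{\rho^2}$. Your cube-roots-of-unity computation is actually more reliable than the paper's explicit trigonometry, whose displayed $P_3$ is not the $120^\circ$ rotation of $P_1,P_2$, so its identity for $\sum_n x_{P_n}$ fails as written (e.g.\ at $\alpha=0$).

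The global equivalence you flag as the main obstacle is not proved in the paper either; it simply asserts that maximizing the objective means maximizing $\overline{\rho^2}$ while driving $\bar x,\bar y,\overline{xy}$ to zero. So your bound $f\le\overline{\rho^2}$, with equality exactly on the set $\bar x=\bar y=\overline{xy}=0$, already goes beyond what the paper establishes. That set is larger than the rotation-invariant class, and on it $f=\overline{\rho^2}$ identically. Reading the theorem as a restricted reformulation plus this upper bound, as in your fallback, is what both arguments actually support.
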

\vspace{-0.5em}
	\begin{proof}
		Based on the expression of $\operatorname{var}{(\bm{x})}$, $\operatorname{var}(\bm{y})$, $\operatorname{cov}(\bm{x},\bm{y})$, and $\rho_n^2=x_n^2+y_n^2$, the objective function \eqref{maxP} can be rewritten as
		\begin{align}\label{Max}
			& \overline{\rho^2} -\bar{x}^2-\bar{y}^2-\tfrac{(\overline{xy}-\bar{x}-\bar{y})^2}{\operatorname{var}(\bm{x})}-\tfrac{(\overline{xy}-\bar{x}-\bar{y})^2}{\operatorname{var}(\bm{y})}.
		\end{align}
		To maximize \eqref{Max}, we need to maximize $ \overline{\rho^2}$ while making $\bar{x}$, $\bar{y}$, and $\overline{xy}$ as close to zero as possible. As illustrated in Fig. \ref{Fig2}(b) and (c), the movable region is an equilateral triangle $DEF$ denoted as $\mathcal{C}$. This region is symmetric with respect to the $x$ axis, which inherently satisfies $\bar{y}=0$. Without loss of generality, we proceed to prove that $\bar{x}=0$ and $\overline{xy}=0$ for a given set of three MAs. For any MA positioned at a specific point $P_1$ such that the angle $\angle DOP_1 = \alpha$, there inevitably exist two corresponding MAs located at points $P_2$ and $P_3$. These two points are geometrically constrained such that the angle $\angle EOP_2 = \alpha$ with the radial distance $\overline{OP_2} = \overline{OP_1}$, and the angle $\angle FOP_3 = \alpha$ with the radial distance $\overline{OP_3} = \overline{OP_1}$.
		The coordinates of MAs $P_1$, $P_2$, and $P_3$ are therefore given by 
		\begin{align}
			P_1:&\bigl(\overline{OP_1}\cos (\tfrac{\pi}{3}-\alpha),\overline{OP_1}\sin \bigl(\tfrac{\pi}{3}-\alpha\bigr)\bigr)=(x_{P_1},y_{P_1}),\nonumber\\
			P_2:&\bigl(-\overline{OP_2}\cos \alpha,\overline{OP_2}\sin \alpha\bigr)=(x_{P_2},y_{P_2}),\\
			P_3:&\bigl(-\overline{OP_3}\cos (\alpha-\tfrac{\pi}{6}),-\overline{OP_3}\sin (\alpha-\tfrac{\pi}{6})\bigr)=(x_{P_3},y_{P_3}).\nonumber
		\end{align}
				\begin{figure*}
			\begin{align}\label{y=0}
				\sum\nolimits_{n=1}^{3}x_{P_n}
				&=\overline{OP_1}\cos \bigl(\tfrac{\pi}{3}-\alpha\bigr)
				-\overline{OP_2}\cos \alpha
				-\overline{OP_3}\cos \bigl(\alpha-\tfrac{\pi}{6}\bigr)=0.
				\\\label{xy}
				\sum\nolimits_{n=1}^{3}x_{P_n}y_{P_n}
				&=|\overline{OP_1}|^2\sin \bigl(\tfrac{\pi}{3}-\alpha\bigr)\cos \bigl(\tfrac{\pi}{3}-\alpha\bigr)-|\overline{OP_2}|^2\sin \alpha\cos \alpha
				-|\overline{OP_3}|^2\sin \bigl(\alpha-\tfrac{\pi}{6}\bigr)\cos \bigl(\alpha-\tfrac{\pi}{6}\bigr)\nonumber\\
				&=|\overline{OP_1}|^2\sin \bigl(\tfrac{2\pi}{3}-2\alpha\bigr)
				-|\overline{OP_2}|^2\sin 2\alpha
				-|\overline{OP_3}|^2\sin \bigl(2\alpha-\tfrac{\pi}{3}\bigr)=0.
			\end{align}
			\hrule
		\end{figure*}
		Therefore, the expression of $\bar{x}$ and $\overline{xy}$ can be given by \eqref{y=0} and \eqref{xy}, where the second equality in \eqref{xy} is obtained by applying the product-to-sum trigonometric identity.  
		Because the selection of the MA $P_1$ is arbitrary, one can always find two additional MAs such that $\bar{x}=0$ and $\overline{xy}=0$ are satisfied. 
		For a system comprising $N$ antennas, we can partition the array into $N/3$ distinct subsets. Within each individual subset, the local spatial coordinates strictly satisfy $\bar{x}=0$, $\bar{y}=0$, and $\overline{xy}=0$. 
		To this end, we completes the proof of Theorem \ref{Theorem2}.
	\end{proof}
	Theorem \ref{Theorem2} demonstrates that the equilateral triangular MR perfectly satisfies all the conditions in \eqref{const3} and obeys Theorem \ref{Theorem1}, thereby both maximizing the objective function \eqref{maxP} and take full advantage of spcace sources.

	In the next step, we elaborate on how to optimize the positions of MAs within an equilateral triangular MR. As illustrated in \Cref{proof1} (c), we consider the case $l=md_{\min}$ without loss of generality, where $l$ denotes the MR length and $m $ represents a positive integer, as this configuration aligns with practical antenna array designs.The equilateral triangular MR is discretized into a set of candidate positions $\mathcal{B}$ with the minimum distance $d_{\min}$. These candidate positions correspond to the blue dots illustrated in \Cref{proof1}(c). This discrete set contains $m(m+1)/2$ points in total and represents the minimum overlap area that satisfies the constraints of the optimization problem. For any given number of antennas $N \leq |\mathcal{B}|$, the optimal positions for the $N$ MAs are selected from the candidate set $\mathcal{B}$. These positions specifically correspond to the $N$ candidates located farthest from the centroid of the triangular region. The overall procedure of the proposed method is summarized in Algorithm \ref{Algorithm1}. 

\begin{algorithm}[t]
	\renewcommand{\algorithmicrequire}{\textbf{Input}} 
	\renewcommand{\algorithmicensure}{\textbf{Output}} 
	\caption{Proposed Joint Shape-Position Optimization Algorithm}
	\label{Algorithm1}
	\begin{algorithmic}[1]
		\Require 
		Length of the MR $l$, number of MAs $N$, and the minimum distance $d_{\min}$.
		\State Calculate the positive integer $m = l/d_{\min}$ and construct the discrete candidate position set $\mathcal{B}$ within the equilateral triangular MR;
		\State Compute the spatial distance from each candidate point in $\mathcal{B}$ to the centroid of the triangular region;
		\State Sort all candidate points in descending order according to their computed distances to the centroid;
		\State Select the position coordinates corresponding to the first $N$ largest distance values;
		\Ensure Optimal positions of the MAs.
	\end{algorithmic}
\end{algorithm}

	\section{Simulation Results}
	\label{Sec:Simulation}
	
	In this section, we evaluate the performance of the proposed MR for MA through numerical simulations based on the MUSIC algorithm. 	
	We set the parameters as follows, including the MR $\Theta=    27.71\lambda^2$, number of MAs $M=36$, number of snapshots $T=1$, and the number of the source signal is $K=1$ with $\theta=45^\circ,\phi=60^\circ$.
	The length of the proposed equilateral triangular MR is $8\lambda$,  the length of the square area is $5.26\lambda$, and the radius of the circle area is $2.86\lambda$, respectively.
	The labels \textbf{PMA}, \textbf{SMA}, \textbf{UCA}, and \textbf{URA} denote the proposed MA (PMA) array, square area-based MA (SMA) array, uniform circular array and uniform rectangular array, respectively. The labels \textbf{PMA-CRB}, \textbf{SMA-CRB}, \textbf{UCA-CRB}, and \textbf{URA-CRB} denote the CRBs, respectively.

	\begin{figure*}[t]
		\centering
		\subfloat[PMA.]{
			\includegraphics[width=0.22\textwidth]{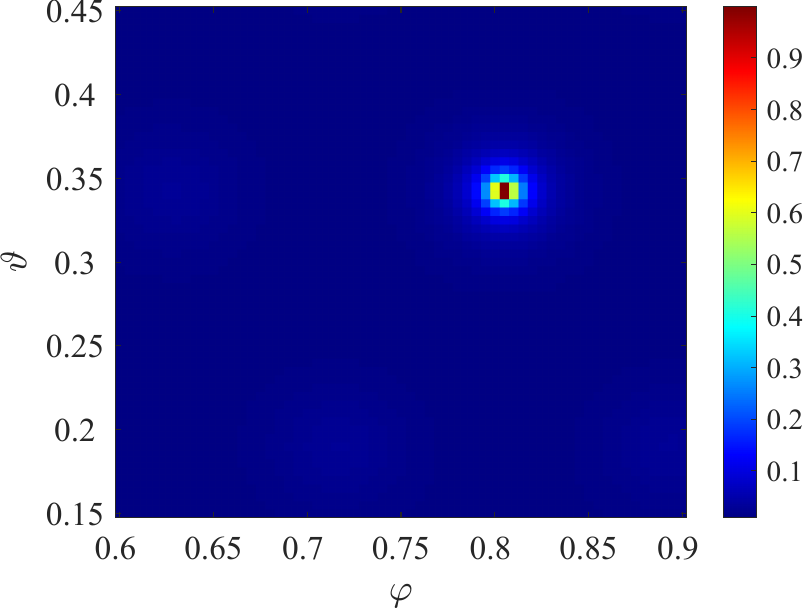}
			\label{Fig_PMA}
		}
		\hfill
		\subfloat[SMA.]{
			\includegraphics[width=0.22\textwidth]{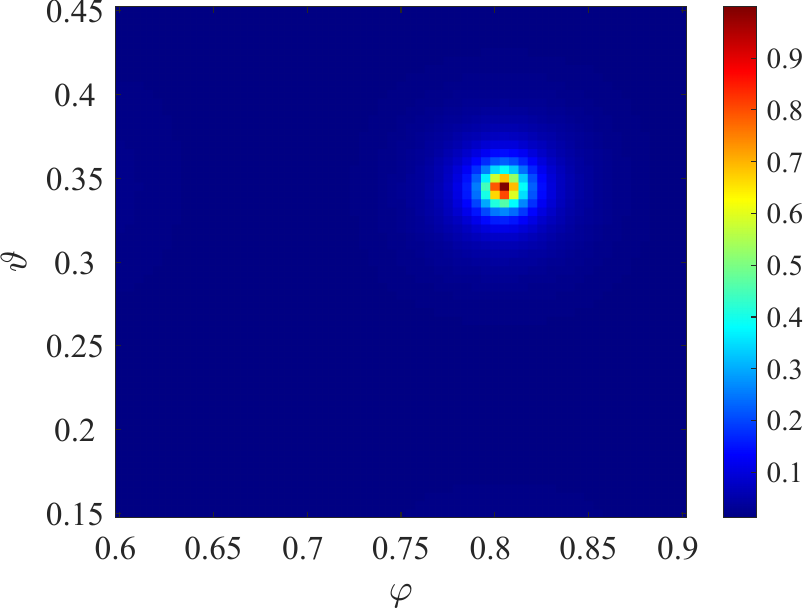}
			\label{Fig_SMA}
		}
		\hfill
		\subfloat[UCA.]{
			\includegraphics[width=0.22\textwidth]{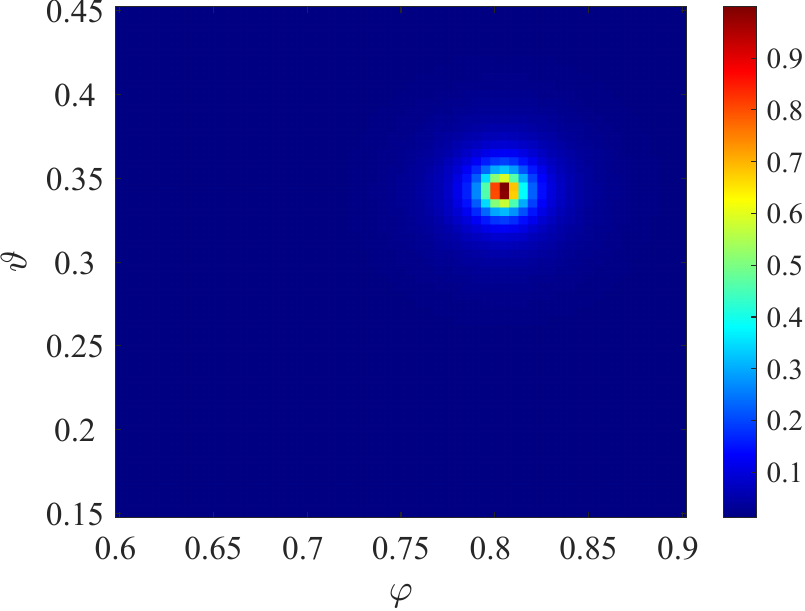}
			\label{Fig_UCA}
		}
		\hfill
		\subfloat[URA.]{
			\includegraphics[width=0.22\textwidth]{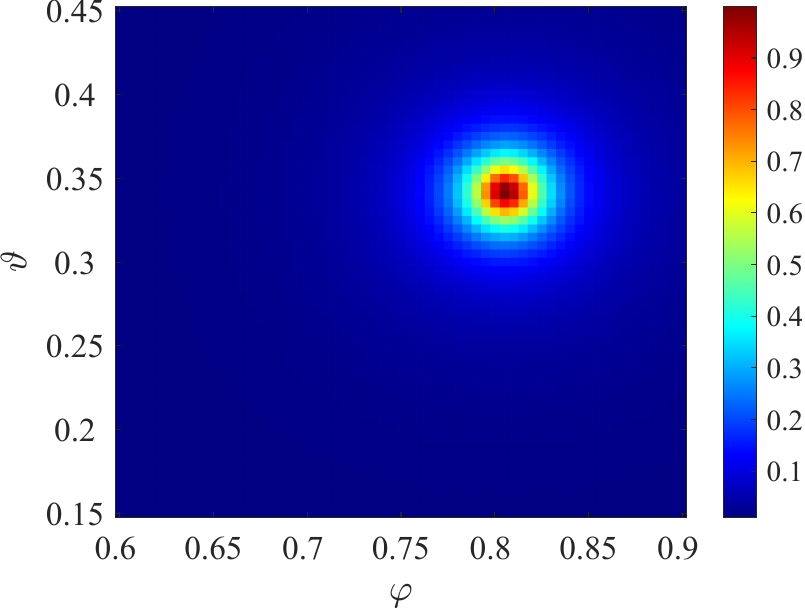}
			\label{Fig_URA}
		}
		\caption{Power spectrum function versus $\vartheta$ and $\varphi$ for different arrays.}
		\label{Fig_spectrum_comparison}
		\label{Fig2}
	\end{figure*}

	\begin{figure}[t]
		\centering
		\begin{minipage}{0.45\linewidth}
			\centering
			\includegraphics[width=\linewidth]{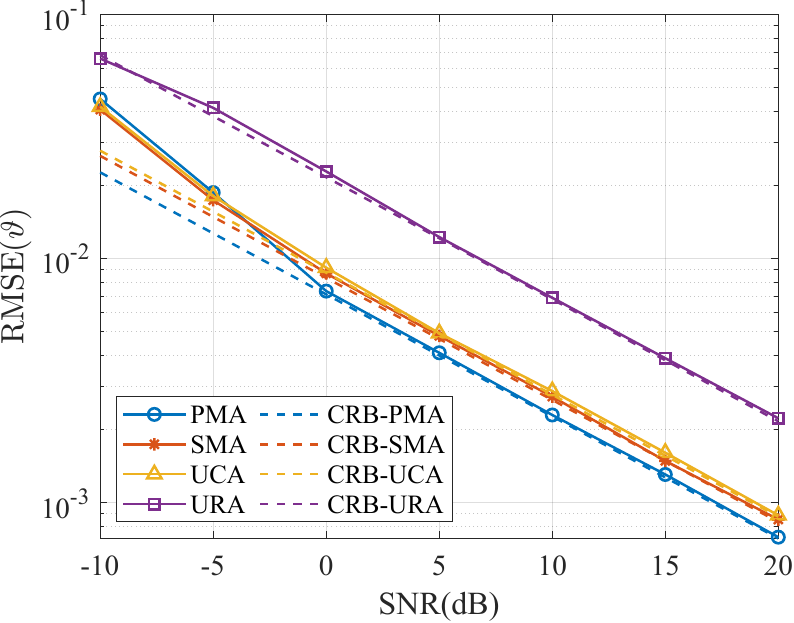}
			\caption{RMSE($\vartheta$) versus SNR.}
			\label{Fig_vartheta}
		\end{minipage}
		\hfill
		\begin{minipage}{0.45\linewidth}
			\centering
			\includegraphics[width=\linewidth]{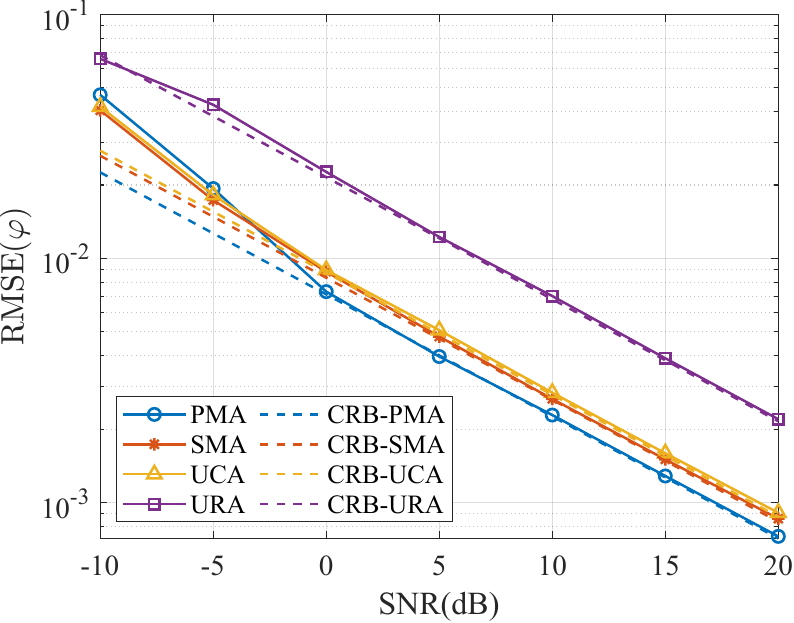}
			\caption{RMSE($\varphi$) versus SNR.}
			\label{Fig_varphi}
		\end{minipage}
	\end{figure}
	\begin{figure}[t]
		\centering
		\begin{minipage}{0.45\linewidth}
			\centering
			\includegraphics[width=\linewidth]{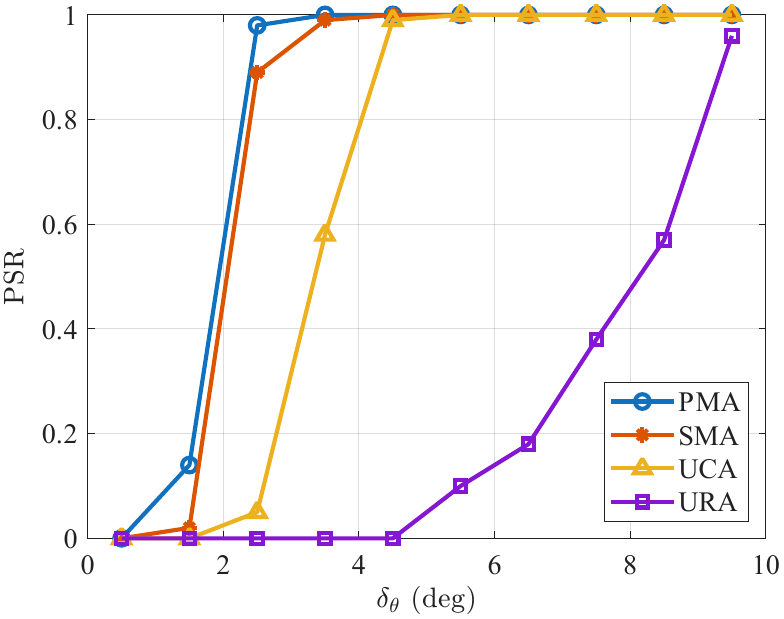}
			\caption{PSR versus $\delta_\theta$.}
			\label{thetaseparation}
		\end{minipage}
		\hfill
		\begin{minipage}{0.45\linewidth}
			\centering
			\includegraphics[width=\linewidth]{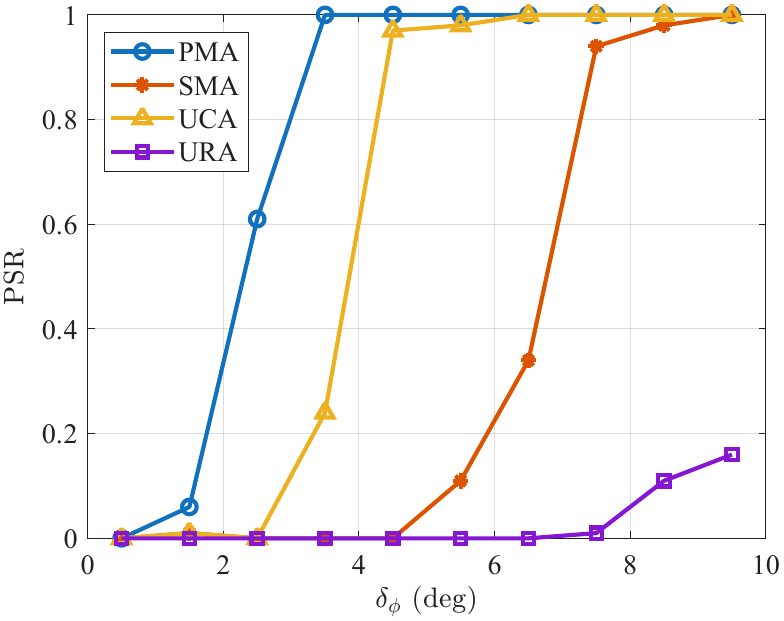}
			\caption{PSR versus $\delta_\phi$.}
			\label{phiseparation}
		\end{minipage}
		\vspace{-1em}
	\end{figure}
	\begin{figure}[t]
		\centering
		\begin{minipage}{0.45\linewidth}
			\centering
			\includegraphics[width=\linewidth]{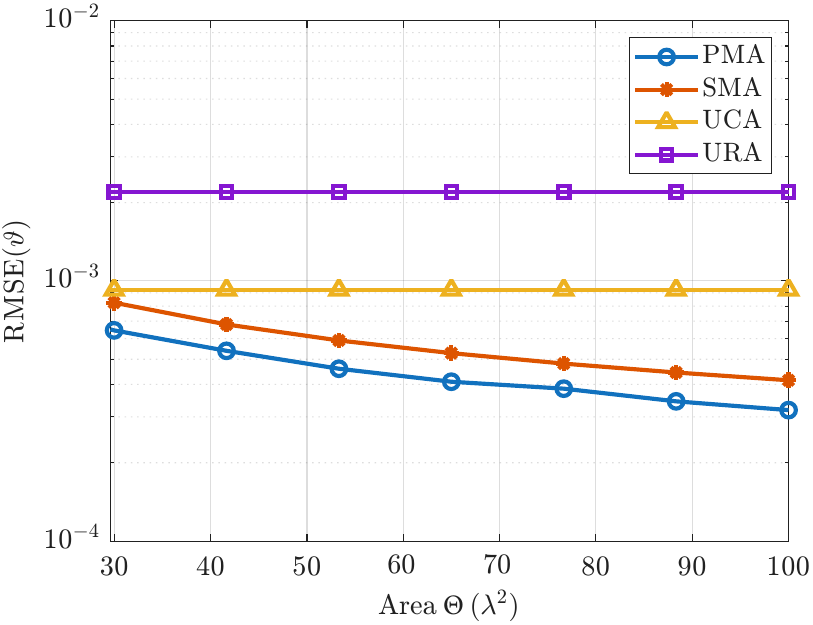}
			\caption{RMSE($\vartheta$) versus $\Theta$.}
			\label{Atheta_SNRb}
		\end{minipage}
		\hfill
		\begin{minipage}{0.45\linewidth}
			\centering
			\includegraphics[width=\linewidth]{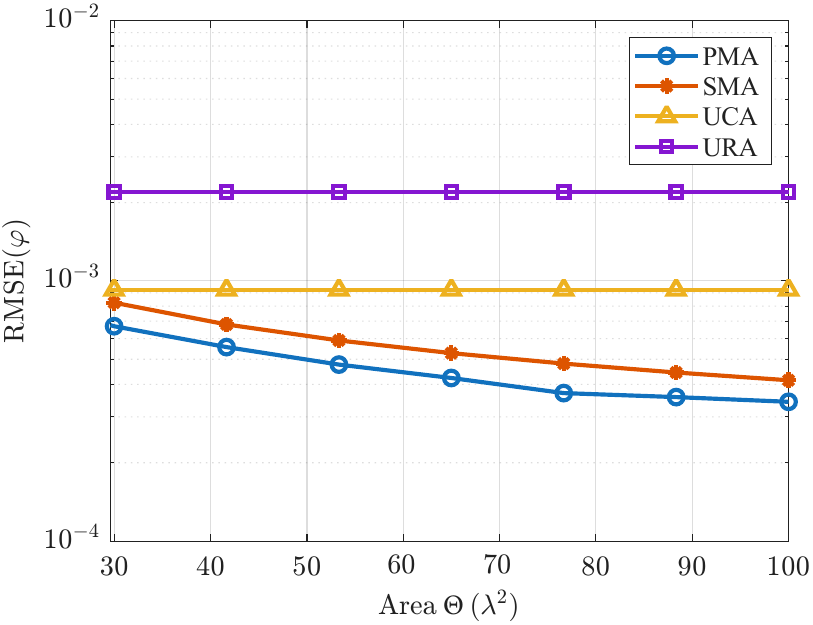}
			\caption{RMSE($\varphi$) versus $\Theta$.}
			\label{ARMSE_varphib}
		\end{minipage}
	\end{figure}

	\Cref{Fig2} illustrate the power spectrum function with respect to the grids of $\vartheta$ and $\varphi$ during spectral peak search. The SNR is set to 20 dB. It can be observed that the proposed array exhibits a narrower main lobe compared to other arrays, resulting in a sharper peak. This characteristic reduces the probability of errors in DOA estimation, thereby demonstrating the superior performance of the proposed array.
	
	\Cref{Fig_vartheta,Fig_varphi} depict the RMSE of $\vartheta$ and $\varphi$ as a function of SNR. When SNR $< 0$ dB, the RMSEs of PMA, SMA, and UCA are significantly higher than the CRB, which is primarily attributed to the influence of grating lobes. Conversely, when SNR $> 0$ dB, the effect of grating lobes is suppressed by the increasing SNR, causing the estimation performance of all arrays to converge to the CRB. The proposed array achieves the smallest CRB, indicating the best estimation performance among the compared arrays.
	
	\Cref{thetaseparation,phiseparation} present the probability of successful resolution (PSR) for $\theta$ and $\phi$, respectively. We set three sources, where one is set as $\theta_1=135^\circ, \phi=115^\circ$, and two are closely spaced from the direction $\theta_2=45^\circ, \phi_2=60^\circ$ and $\theta_3=45^\circ+\delta_\theta, \phi_3=60^\circ+\delta_\phi$. $\delta_\theta=0$ in \Cref{thetaseparation} and $\delta_\phi=0$ in \Cref{phiseparation}, respectively. The evaluated methods are identified to successfully distinguish the two sources if $|\hat{\theta}_{l,k}-\theta_k|<\delta_\theta/2$ in \Cref{thetaseparation} and $|\hat{\phi}_{l,k}-\phi_k|<\delta_\phi/2$ in \Cref{phiseparation}, respectively. For the URA, a high PSR in estimating $\theta$ and $\phi$ is achieved only when the angular separation exceeds $10^\circ$. In contrast, the PMA can successfully estimate these parameters even when the separation is as small as $3^\circ$. Furthermore, while the SMA and UCA exhibit higher estimation accuracy than the URA, they remain inferior to the PMA.
	
	\Cref{Atheta_SNRb,ARMSE_varphib} demonstrate the performance of $\vartheta$ and $\varphi$ versus the area of MR. The SNR is set as $10$ dB. Regarding both $\vartheta$ and $\varphi$, the performance of conventional UCA and URA remains invariant as the MR size increases. In contrast, the SMA and PMA exhibit monotonous improvements. Furthermore, the PMA consistently outperforms the SMA across the entire range of MR dimensions.

	\section{Conclusion}
	
	This paper addressed 2D DOA estimation using MA technology by minimizing the CRB via joint shape-position optimization. Specifically, we mathematically proved that an equilateral triangular region yields the minimum overlap area to maximize spatial utilization. Moreover, the structural symmetry of this geometry simplified the optimization problem. Subsequently, we determined the optimal antenna positions by selecting the candidate locations farthest from the centroid. Simulations demonstrated that the proposed design significantly enhanced estimation performance by reducing the main lobe bandwidth during spectral peak search.

	
	
	
	%
	\bibliographystyle{IEEEtran}
	\bibliography{mybib_Abbreviation_1}

\end{document}